\providecommand{\U}[1]{\protect\rule{.1in}{.1in}}
\newtheorem{theorem}{Theorem}
\newtheorem{corollary}[theorem]{Corollary}
\newtheorem{lemma}[theorem]{Lemma}
\newtheorem{remark}[theorem]{Remark}
\newenvironment{proof}[1][Proof]{\noindent\textbf{#1.} }{\ \rule{0.5em}{0.5em}}
\let\pdfoutput=\undefined\fi
\begin{document}

\title{\textbf{Analytical} \textbf{Blowup Solutions to the }$2$\textbf{-dimensional
Isothermal Euler-Poisson Equations of Gaseous Stars II}}
\author{Y\textsc{uen} M\textsc{anwai\thanks{E-mail address: nevetsyuen@hotmail.com }}\\\textit{Department of Applied Mathematics, }\\\textit{The Hong Kong Polytechnic University,}\\\textit{Hung Hom, Kowloon, Hong Kong}}
\date{Revised 29-July-2009}
\maketitle

\begin{abstract}
This article is the continued version of the analytical blowup solutions for
2-dimensional Euler-Poisson equations \cite{Y1}. With extension of the blowup
solutions with radial symmetry for the isothermal Euler-Poisson equations in
$R^{2}$, other special blowup solutions in $R^{2}$ with non-radial symmetry
are constructed by the separation method. We notice that the results are the
first evolutionary solutions with non-radial symmetry for the system.

Key words: Analytical Solutions, Euler-Poisson Equations, Isothermal, Blowup,
Non-radial Symmetry, Line Source or Sink

\end{abstract}

\section{Introduction}

The evolution of a self-gravitating fluid (gaseous stars) can be formulated by
the isentropic Euler-Poisson equations of the following form:
\begin{equation}
\left\{
\begin{array}
[c]{rl}%
{\normalsize \rho}_{t}{\normalsize +\nabla\cdot(\rho\vec{u})} &
{\normalsize =}{\normalsize 0,}\\
{\normalsize (\rho\vec{u})}_{t}{\normalsize +\nabla\cdot(\rho\vec{u}%
\otimes\vec{u})+\nabla P} & {\normalsize =}{\normalsize -\rho\nabla\Phi,}\\
{\normalsize \Delta\Phi(t,\vec{x})} & {\normalsize =\alpha(N)}%
{\normalsize \rho,}%
\end{array}
\right.  \label{Euler-Poisson}%
\end{equation}
where $\alpha(N)$ is a constant related to the unit ball in $R^{N}$:
$\alpha(1)=2$; $\alpha(2)=2\pi$ and For $N\geq3,$%
\begin{equation}
\alpha(N)=N(N-2)V(N)=N(N-2)\frac{\pi^{N/2}}{\Gamma(N/2+1)},
\end{equation}
where $V(N)$ is the volume of the unit ball in $R^{N}$ and $\Gamma$ is the
Gamma function. And as usual, $\rho=\rho(t,\vec{x})$ and $\vec{u}=\vec
{u}(t,\vec{x})=(u_{1},u_{2},....,u_{N})\in\mathbf{R}^{N}$ are the density and
the velocity respectively.

In the above system, the self-gravitational potential field $\Phi=\Phi
(t,\vec{x})$\ is determined by the density $\rho$ through the Poisson equation.

The equation (\ref{Euler-Poisson})$_{3}$ is the Poisson equation through which
the gravitational potential is determined by the density distribution of the
density itself. Thus, we call the system (\ref{Euler-Poisson}) the
Euler-Poisson equations. The equations can be viewed as a prefect gas model.
The function $P=P(\rho)$\ is the pressure. The $\gamma$-law can be applied on
the pressure $P(\rho)$, i.e.%
\begin{equation}
{\normalsize P}\left(  \rho\right)  {\normalsize =K\rho}^{\gamma}%
:=\frac{{\normalsize \rho}^{\gamma}}{\gamma}, \label{gamma}%
\end{equation}
which is a commonly the hypothesis. The constant $\gamma=c_{P}/c_{v}\geq1$,
where $c_{P}$, $c_{v}$\ are the specific heats per unit mass under constant
pressure and constant volume respectively, is the ratio of the specific heats,
that is, the adiabatic exponent in (\ref{gamma}). In particular, the fluid is
called isothermal if $\gamma=1$. It can be used for constructing models with
non-degenerate isothermal cores, which have a role in connection with the
so-called Schonberg-Chandrasekhar limit \cite{KW}.

The system can be rewritten as%
\begin{equation}
\left\{
\begin{array}
[c]{rl}%
\rho_{t}+\nabla\cdot\vec{u}\rho+\nabla\rho\cdot\vec{u} & {\normalsize =}%
{\normalsize 0,}\\
\rho\left(  \frac{\partial u_{i}}{\partial t}+\sum_{k=1}^{N}u_{k}%
\frac{\partial u_{i}}{\partial x_{k}}\right)  +\frac{\partial}{\partial x_{i}%
}P(\rho) & {\normalsize =}{\normalsize -\rho}\frac{\partial}{\partial x_{i}%
}{\normalsize \Phi(\rho),}\text{ for }i=1,2,...N,\\
{\normalsize \Delta\Phi(t,x)} & {\normalsize =\alpha(N)}{\normalsize \rho.}%
\end{array}
\right.  \label{eqeq1}%
\end{equation}
For $N=3$, (\ref{eqeq1}) is a classical (non-relativistic) description of a
galaxy, in astrophysics. See \cite{BT}, \cite{C} and \cite{KW} for a detail
about the system.

For the local existence results about the system were shown in \cite{M},
\cite{B} and \cite{G}.\ In particular, the radially symmetric solutions can be
expressed by%
\begin{equation}
\rho(t,\vec{x})=\rho(t,r)\text{ and }\vec{u}(t,\vec{x})=\frac{\vec{x}}%
{r}V(t,r):=\frac{\vec{x}}{r}V,
\end{equation}
where the radial diameter $r:=\left(  \sum_{i=1}^{N}x_{i}^{2}\right)  ^{1/2}$.
Historically in astrophysics, Goldreich and Weber constructed the analytical
blowup (collapsing) solutions of the $3$-dimensional Euler-Poisson equations
for $\gamma=4/3$ for the non-rotating gas spheres \cite{GW}. After that,
Makino \cite{M1} obtained the rigorously mathematical proof of the existence
of such kind of blowup solutions. Besides, Deng, Xiang and Yang extended the
above blowup solutions in $R^{N}$ ($N\geq3$) \cite{DXY}. Then, Yuen obtained
the blowup solutions in $R^{2}$ with $\gamma=1$ by a new transformation
\cite{Y1}. The family of the analytical solutions are rewritten as

For $N\geq3$ and $\gamma=(2N-2)/N$, in \cite{DXY}
\begin{equation}
\left\{
\begin{array}
[c]{c}%
\rho(t,r)=\left\{
\begin{array}
[c]{c}%
\dfrac{1}{a^{N}(t)}y(\frac{r}{a(t)})^{N/(N-2)},\text{ for }r<a(t)Z_{\mu};\\
0,\text{ for }a(t)Z_{\mu}\leq r.
\end{array}
\right.  \text{, }V{\normalsize (t,r)=}\dfrac{\dot{a}(t)}{a(t)}%
{\normalsize r,}\\
\ddot{a}(t){\normalsize =}\dfrac{-\lambda}{a^{N-1}(t)},\text{ }%
{\normalsize a(0)=a}_{1}\neq0{\normalsize ,}\text{ }\dot{a}(0){\normalsize =a}%
_{2},\\
\ddot{y}(z){\normalsize +}\dfrac{N-1}{z}\dot{y}(z){\normalsize +}\dfrac
{\alpha(N)}{(2N-2)K}{\normalsize y(z)}^{N/(N-2)}{\normalsize =\mu,}\text{
}y(0)=\alpha>0,\text{ }\dot{y}(0)=0,
\end{array}
\right.  \label{solution2}%
\end{equation}
where $\mu=[N(N-2)\lambda]/(2N-2)K$ and the finite $Z_{\mu}$ is the first zero
of $y(z)$;

For $N=2$ and $\gamma=1$, in \cite{Y1}%
\begin{equation}
\left\{
\begin{array}
[c]{c}%
\rho(t,r)=\dfrac{1}{a^{2}(t)}e^{y\left(  r/a(t)\right)  }\text{,
}V{\normalsize (t,r)=}\dfrac{\dot{a}(t)}{a(t)}{\normalsize r;}\\
\ddot{a}(t){\normalsize =}\dfrac{-\lambda}{a(t)},\text{ }{\normalsize a(0)=a}%
_{1}>0{\normalsize ,}\text{ }\dot{a}(0){\normalsize =a}_{2};\\
\ddot{y}(z){\normalsize +}\dfrac{1}{z}\dot{y}(z){\normalsize +\dfrac
{\alpha(2)}{K}e}^{y(z)}{\normalsize =\mu,}\text{ }y(0)=\alpha,\text{ }\dot
{y}(0)=0,
\end{array}
\right.  \label{solution 3}%
\end{equation}
where $K>0$, $\mu=2\lambda/K$ with a sufficiently small $\lambda$ and $\alpha$
are constants.

And for other special blowup solutions, the readers may see the details in
\cite{Y}.

Very recently, Yuen extended the above solutions to the pressureless
Navier-Stokes-Poisson equations with density-dependent viscosity in
\cite{Y2}\textit{:}%
\begin{equation}
\left\{
\begin{array}
[c]{rl}%
\rho_{t}+V\rho_{r}+\rho V_{r}+{\normalsize \dfrac{N-1}{r}\rho V} &
{\normalsize =0,}\\
\rho\left(  V_{t}+VV_{r}\right)  +\dfrac{\alpha(N)\rho}{r^{N-1}}%
{\displaystyle\int_{0}^{r}}
\rho(t,s)s^{N-1}ds & {\normalsize =}[\kappa\rho^{\theta}]_{r}\left(
\frac{N-1}{r}V+V_{r}\right)  +(\kappa\rho^{\theta})(V_{rr}+\dfrac{N-1}{r}%
V_{r}+\dfrac{N-1}{r^{2}}V).
\end{array}
\right.
\end{equation}
However, the known solutions are all in radial symmetry. In this paper, we are
able to obtain the similar results to the non-radially symmetric cases for the
$2$-dimensional Euler-Poisson equations (\ref{eqeq1}) in the following theorem:

\begin{theorem}
\label{thm2 copy(2)}For the $2$-dimensional isothermal Euler-Poisson equations
(\ref{eqeq1}), there exists a family of solutions,%
\begin{equation}
\left\{
\begin{array}
[c]{c}%
\rho(t,x,y)=\frac{1}{a(t)^{2}}e^{-\frac{\Phi\left(  \frac{Ax+By}{a(t)}\right)
}{K}+C}\text{, }{\normalsize \vec{u}(t,x,y)=\dfrac{\dot{a}(t)}{a(t)}%
(}x,y){\normalsize ,}\\
a(t)=a_{1}+a_{2}t,\\
\ddot{\Phi}(s)-\epsilon^{\ast}e^{-\frac{\Phi(s)}{K}}=0,\text{ }\Phi
(0)=\alpha,\text{ }\dot{\Phi}(0)=\beta,
\end{array}
\right.  \label{ss2}%
\end{equation}
where $A,$ $B,$ ${\normalsize a}_{1}\neq0,$ ${\normalsize a}_{2\text{ }},$
$\frac{2\pi e^{C}}{A^{2}+B^{2}}=\epsilon^{\ast}>0$ with that $A$ and $B$ are
not both $0$, $\alpha$ and $\beta$ are constants.\newline In particular,
$a_{1}>0$ and $a_{2}<0$, the solutions (\ref{ss2}) blow up in the finite time
$T=-a_{2}/a_{1}$.
\end{theorem}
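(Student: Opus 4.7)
The strategy is direct verification: substitute the separated ansatz (\ref{ss2}) into the three equations of (\ref{eqeq1}) and show that the stated conditions on $a(t)$ and $\Phi$ are exactly what consistency forces. Introduce the similarity variable $s:=(Ax+By)/a(t)$, so that $\nabla s=(A/a,B/a)$, $\Delta s=0$ and $\partial_{t}s=-s\dot a/a$, while the ansatz velocity $\vec u=(\dot a/a)(x,y)$ satisfies $\nabla\cdot\vec u=2\dot a/a$. These identities reduce every derivative of the ansatz to an elementary function of $s$, $a$, and their derivatives.

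For the continuity equation, a direct calculation gives $\rho_{t}=-(2\dot a/a)\rho+(\dot a\,s\,\dot\Phi(s)/(Ka))\rho$ and $\nabla\rho\cdot\vec u=-(\dot a\,s\,\dot\Phi(s)/(Ka))\rho$, so $\rho_{t}+\nabla\cdot(\rho\vec u)=0$ holds identically, with no condition on $a$ or $\Phi$. For the momentum equation, the affine choice $a(t)=a_{1}+a_{2}t$ satisfies $\ddot a=0$; hence $\partial_{t}u_{i}=-(\dot a/a)^{2}x_{i}$ exactly cancels $\sum_{k}u_{k}\partial_{k}u_{i}=(\dot a/a)^{2}x_{i}$, and the inertial operator $(\partial_{t}+\vec u\cdot\nabla)\vec u$ vanishes pointwise. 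What remains is $\nabla P=-\rho\,\nabla\Psi$, where $\Psi$ denotes the gravitational potential; since $P=K\rho$ this reduces to $\nabla\Psi=-K\nabla\log\rho=(\dot\Phi(s)/a)(A,B)=\nabla\Phi(s)$, so necessarily $\Psi(t,x,y)=\Phi(s)+f(t)$ for some function $f$ of $t$ alone.

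It remains to impose the Poisson equation. Since $\Delta\Psi=((A^{2}+B^{2})/a^{2})\,\ddot\Phi(s)$, equating with $\alpha(2)\rho=(2\pi/a^{2})\,e^{-\Phi(s)/K+C}$ produces the ODE $\ddot\Phi(s)=\epsilon^{\ast}e^{-\Phi(s)/K}$ with $\epsilon^{\ast}=2\pi e^{C}/(A^{2}+B^{2})$, which is precisely the third equation in (\ref{ss2}); local existence of $\Phi$ with $\Phi(0)=\alpha$ and $\dot\Phi(0)=\beta$ follows from Picard--Lindel\"of. The blowup assertion is then immediate: when $a_{1}>0$ and $a_{2}<0$ the linear factor $a(t)$ has a positive zero at which $\rho\sim 1/a(t)^{2}\to\infty$. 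The only subtle point in the verification is that the momentum equation determines the gravitational potential only up to a $t$-dependent additive constant, while the Poisson equation then pins it down; the compatibility of these two requirements is exactly the reduced ODE for $\Phi$, and the hypothesis that $A,B$ are not both zero is what keeps $\epsilon^{\ast}$ finite and positive.
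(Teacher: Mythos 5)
Your verification follows essentially the same route as the paper: check the continuity equation identically (the paper's Lemma \ref{lem:generalsolutionformasseq copy(1)}), reduce the Poisson equation to the ODE $\ddot{\Phi}(s)=\epsilon^{\ast}e^{-\Phi(s)/K}$ with $\epsilon^{\ast}=2\pi e^{C}/(A^{2}+B^{2})$ (Lemma \ref{lemma2 copy(1)}), and observe that with $\ddot{a}=0$ the inertial terms vanish while the pressure gradient of $K\rho$ exactly cancels $\rho\nabla\Phi$. Your computations are correct, and your ordering (derive the potential from the momentum balance, then pin it down with Poisson) versus the paper's (posit the potential, then verify momentum) is only a cosmetic difference.

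There is, however, one genuine gap: you invoke only \emph{local} existence of $\Phi$ via Picard--Lindel\"of, but the formula for $\rho(t,x,y)$ requires $\Phi(s)$ to be defined for every $s=(Ax+By)/a(t)\in(-\infty,+\infty)$, since $Ax+By$ ranges over all of $\mathbb{R}$ when $(x,y)$ does and $A,B$ are not both zero. Local existence near $s=0$ does not by itself rule out that $\Phi$ blows up at some finite $s_{0}$, in which case the claimed family of solutions would not be globally defined in space. The paper closes this with a separate continuation argument (Lemma \ref{lemma3}): since $\ddot{\Phi}=\epsilon^{\ast}e^{-\Phi/K}>0$, one gets $\Phi(s)\geq\alpha$ and hence the a priori bound $\dot{\Phi}(s)\leq\beta+\epsilon^{\ast}e^{-\alpha/K}s$ on any interval of existence, which prevents finite-$s$ blowup and extends the solution to all of $\mathbb{R}$ (with $\Phi\to+\infty$ as $s\to\pm\infty$, so the density decays at spatial infinity). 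You should add this step. A minor further point: the zero of $a(t)=a_{1}+a_{2}t$ occurs at $t=-a_{1}/a_{2}$, so your unnamed ``positive zero'' is in fact $-a_{1}/a_{2}$ rather than the $-a_{2}/a_{1}$ printed in the theorem; your phrasing silently avoids, but does not flag, this discrepancy.
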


\begin{remark}
The solutions (\ref{ss2}) is a line source or sink in terms of cylindrical
coordinates, when we take $z$ direction to lie alone the characteristic line
of the source or sink. For the physical significance of such kind of
solutions, the interested readers may refer.P.409-410 of \cite{JK} for details.
\end{remark}

\section{Separable Blowup Solutions}

Before presenting the proof of Theorem \ref{thm2 copy(2)}, we prepare some
lemmas first.

\begin{lemma}
\label{lem:generalsolutionformasseq copy(1)}For the continuity equation
(\ref{eqeq1})$_{1}$ in $R^{2}$, there exist solutions,%
\begin{equation}
\rho(t,x,y)=\frac{f\left(  \dfrac{Ax+By}{a(t)}\right)  }{a^{2}(t)},\text{
}{\normalsize \vec{u}(t,x,y)=\frac{\overset{\cdot}{a}(t)}{a(t)}(}x,y),
\end{equation}
where the scalar function $f(s)\geq0\in C^{1}$ and $a(t)\neq0\in C^{1}.$
\end{lemma}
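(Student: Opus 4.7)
The plan is to verify the lemma by direct substitution into the continuity equation $\rho_{t}+\nabla\cdot\vec{u}\,\rho+\nabla\rho\cdot\vec{u}=0$. First I would introduce the similarity variable $s:=(Ax+By)/a(t)$ so that $\rho=f(s)/a^{2}(t)$, and I would pre-compute the three partial derivatives $s_{t}=-s\dot{a}/a$, $s_{x}=A/a$, $s_{y}=B/a$ by the chain rule. These are the only transcendental pieces of the argument; everything else is bookkeeping.

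Next I would evaluate each of the three terms in the continuity equation separately. For $\rho_{t}$ the product rule gives a term $-\dot{a}sf'(s)/a^{3}$ from differentiating $f$ and a term $-2\dot{a}f(s)/a^{3}$ from differentiating the prefactor $a^{-2}$. For $\nabla\rho\cdot\vec{u}$, using $\vec{u}=(\dot{a}/a)(x,y)$, the $x$ and $y$ contributions assemble into $\dot{a}(Ax+By)f'(s)/a^{4}=\dot{a}sf'(s)/a^{3}$. For $\rho\,\nabla\cdot\vec{u}$, the divergence of the linear velocity field is the constant (in space) $2\dot{a}/a$, producing $2\dot{a}f(s)/a^{3}$.

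Finally I would add the three expressions and observe that the $sf'(s)$ terms cancel between $\rho_{t}$ and $\nabla\rho\cdot\vec{u}$, while the $f(s)$ terms cancel between $\rho_{t}$ and $\rho\,\nabla\cdot\vec{u}$, yielding identically zero. The hypotheses $f\in C^{1}$ and $a\in C^{1}$ with $a(t)\neq0$ are exactly what is needed so that each of these derivatives exists and the division by powers of $a$ is legitimate, and the assumption $f\geq0$ ensures $\rho$ is a physically admissible density.

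There is no real obstacle here; the statement is essentially a similarity-variable ansatz verification, and the main thing to be careful about is sign bookkeeping in $s_{t}$ and the factor of two coming from $\nabla\cdot\vec{u}$ in two spatial dimensions (which is precisely what matches the $a^{-2}$ normalization of $\rho$). The only minor remark worth recording in the write-up is that the argument uses no structural property of $f$ beyond $C^{1}$ regularity, so the lemma leaves $f$ entirely free to be determined later by the momentum and Poisson equations, which is exactly how the theorem's profile $\Phi$ will enter.
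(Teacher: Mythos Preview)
Your proposal is correct and follows essentially the same approach as the paper: a direct substitution of the ansatz into the continuity equation, computing $\rho_{t}$, $\rho\,\nabla\cdot\vec{u}$, and $\nabla\rho\cdot\vec{u}$ separately and observing the pairwise cancellations. Your introduction of the similarity variable $s$ and pre-computation of $s_{t},s_{x},s_{y}$ is a slightly tidier organization than the paper's line-by-line expansion, but the content of the verification is identical.
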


\begin{proof}
We plug the solutions (\ref{ss2}) into the continuity equation (\ref{eqeq1}%
)$_{1}$,
\begin{align}
&  \rho_{t}+\nabla\cdot\vec{u}\rho+\nabla\rho\cdot\vec{u}\\
&  =\frac{\partial}{\partial t}\left[  \frac{f\left(  \dfrac{Ax+By}%
{a(t)}\right)  }{a(t)^{2}}\right]  +\nabla\cdot\frac{\dot{a}(t)}%
{a(t)}(x,y)\frac{f\left(  \dfrac{Ax+By}{a(t)}\right)  }{a(t)^{2}}+\nabla
\frac{f\left(  \dfrac{Ax+By}{a(t)}\right)  }{a(t)^{2}}\cdot\frac{\dot{a}%
(t)}{a(t)}(x,y)\\
&  =\frac{-2\dot{a}(t)}{a(t)^{3}}f\left(  \frac{Ax+By}{a(t)}\right)  +\frac
{1}{a(t)^{2}}\frac{\partial}{\partial t}f\left(  \frac{Ax+By}{a(t)}\right)
+\frac{\dot{a}(t)}{a(t)}\left(  \frac{\partial}{\partial x}x+\frac{\partial
}{\partial y}y\right)  \frac{f\left(  \dfrac{Ax+By}{a(t)}\right)  }{a(t)^{2}%
}\\
&  +\frac{\dot{a}(t)}{a(t)}\left[  \frac{\partial}{\partial x}\frac{f\left(
\dfrac{Ax+By}{a(t)}\right)  }{a(t)^{2}}\cdot x+\frac{\partial}{\partial
y}\frac{f\left(  \dfrac{Ax+By}{a(t)}\right)  }{a(t)^{2}}\cdot y\right]  \\
&  =\frac{-2\dot{a}(t)}{a(t)^{3}}f\left(  \frac{Ax+By}{a(t)}\right)  -\frac
{1}{a(t)^{2}}\dot{f}\left(  \frac{Ax+By}{a(t)}\right)  \frac{(Ax+By)\dot
{a}(t)}{a(t)^{2}}+2\frac{\dot{a}(t)}{a(t)}\frac{f\left(  \dfrac{Ax+By}%
{a(t)}\right)  }{a(t)^{2}}\\
&  +\frac{\dot{a}(t)}{a(t)}\left[  \frac{\dot{f}\left(  \dfrac{Ax+By}%
{a(t)}\right)  }{a(t)^{2}}\frac{Ax}{a(t)}+\frac{\dot{f}\left(  \dfrac
{Ax+By}{a(t)}\right)  }{a(t)^{2}}\frac{By}{a(t)}\right]  \\
&  =0.
\end{align}
The proof is completed.
\end{proof}

On the other hand, we may use the fixed point theorem to show the local
existence of the following ordinary differential equation:

\begin{lemma}
\label{lemma2}There exists a sufficiently small $x_{0}>0$, such that the
equation%
\begin{equation}
\left\{
\begin{array}
[c]{c}%
\ddot{\Phi}(s)-\epsilon^{\ast}e^{-\tfrac{\Phi(s)}{K}}=0,\\
\Phi(0)=\alpha\text{, }\dot{\Phi}(0)=\beta,
\end{array}
\right.  \label{SecondorderElliptic}%
\end{equation}
where $\epsilon^{\ast}>0$, $\alpha$, and $\beta$ are constants, has a solution
$\Phi=\Phi(s)\in C^{2}[0,s_{0}]$.
\end{lemma}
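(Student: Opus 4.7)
The plan is to recast the initial value problem as a fixed-point equation and apply the Banach contraction mapping principle. Integrating (\ref{SecondorderElliptic}) twice and using $\Phi(0)=\alpha$, $\dot\Phi(0)=\beta$ converts the ODE into
\begin{equation}
\Phi(s) = \alpha + \beta s + \epsilon^{\ast}\int_0^s (s-u)\, e^{-\Phi(u)/K}\, du =: T[\Phi](s).
\end{equation}
Any continuous fixed point of $T$ is automatically in $C^2[0,s_0]$, since once $\Phi\in C[0,s_0]$ the integrand is continuous, whence $T[\Phi]\in C^2$ and differentiating twice recovers the original ODE together with the prescribed initial conditions.

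Next I would apply $T$ on the closed ball $B_R = \{\Phi\in C[0,s_0] : \|\Phi-\alpha\|_\infty \leq R\}$ in the sup norm, for some fixed $R>0$. Since $e^{-x/K}$ is bounded on $[\alpha-R,\alpha+R]$ by a constant $M=M(\alpha,R,K)$, one checks the self-mapping property
\begin{equation}
\|T[\Phi]-\alpha\|_\infty \leq |\beta|\, s_0 + \tfrac{1}{2}\epsilon^{\ast} M\, s_0^2 \leq R,
\end{equation}
which holds once $s_0$ is taken small enough. Similarly, $e^{-x/K}$ is Lipschitz on $[\alpha-R,\alpha+R]$ with constant $L=L(\alpha,R,K)$, yielding
\begin{equation}
\|T[\Phi_1]-T[\Phi_2]\|_\infty \leq \tfrac{1}{2}\epsilon^{\ast} L\, s_0^2\, \|\Phi_1-\Phi_2\|_\infty,
\end{equation}
which gives a strict contraction for $s_0$ sufficiently small. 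The Banach theorem then produces a unique $\Phi\in B_R$ with $T[\Phi]=\Phi$, and by the remark above, $\Phi\in C^2[0,s_0]$ solves (\ref{SecondorderElliptic}).

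There is no genuine obstacle: the nonlinearity $e^{-\Phi/K}$ is smooth and locally Lipschitz in $\Phi$, so the result is essentially a packaging of Picard--Lindel\"of. The only bookkeeping is to choose a single $s_0>0$ small enough that both the self-mapping estimate and the contraction estimate hold, which is immediate once $R$ is fixed. Alternatively, one could rewrite (\ref{SecondorderElliptic}) as the first-order system $\dot\Phi=\Psi$, $\dot\Psi=\epsilon^{\ast} e^{-\Phi/K}$ and quote the standard local existence theorem directly, but the integral formulation above has the advantage of being self-contained.
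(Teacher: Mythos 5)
Your proposal is correct and follows essentially the same route as the paper: both recast the initial value problem as a fixed-point equation for an integral operator and apply the contraction mapping principle on $C[0,s_0]$ with the sup norm for $s_0$ small (your kernel $\epsilon^{\ast}\int_0^s(s-u)e^{-\Phi(u)/K}\,du$ is just the paper's nested single integrals written as one double integral). If anything, your version is slightly tighter, since restricting $T$ to the ball $B_R$ and verifying the self-mapping property is what actually justifies using a \emph{local} Lipschitz constant for $e^{-x/K}$ --- a point the paper's proof passes over when it asserts that $e^{y}$ ``is Lipschitz-continuous.''
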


\begin{proof}
We integrate the equation (\ref{SecondorderElliptic}) once:
\begin{equation}
\overset{\cdot}{\Phi}(s)=-\int_{0}^{s}K\epsilon^{\ast}e^{-\frac{\Phi(\eta)}%
{K}}d\eta+\beta.
\end{equation}
And set%
\begin{equation}
{\normalsize f(s,\Phi(s))=-}\int_{0}^{s}{\normalsize K\epsilon}^{\ast
}e^{-\frac{\Phi(\eta)}{K}}{\normalsize d\eta+\beta.}%
\end{equation}
then for any $s_{0}>0$, we get $f\in C^{1}[0,$ $s_{0}]$. and for any
$\Phi_{1,}$ $\Phi_{2}\in C^{2}[0,$ $s_{0}]$, we have,%
\begin{equation}
\left\vert f(s,\Phi_{1}(s))-f(s,\Phi_{2}(s))\right\vert =K\epsilon^{\ast
}\left\vert \int_{0}^{s}(e^{-\frac{\Phi_{1}(\eta)}{K}}-e^{-\frac{\Phi_{2}%
(\eta)}{K}})d\eta\right\vert .
\end{equation}
As $e^{y}$ is a $C^{1}$ function of $y$, we can show that the function $e^{y}%
$, is Lipschitz-continuous. And we get,%
\begin{equation}
\left\vert f(s,\Phi_{1}(s))-f(s,\Phi_{2}(s))\right\vert =O(1)\int_{0}%
^{s}\left\vert \left(  \Phi_{1}(\eta\right)  -\Phi_{2}(\eta)\right\vert
d\eta\leq O(1)s_{0}\underset{0\leq s\leq s_{0}}{\sup}\left\vert \Phi
_{1}(s)-\Phi_{2}(s)\right\vert .
\end{equation}
We let%
\begin{equation}
{\normalsize T\Phi(s)=\alpha+}\int_{0}^{s}{\normalsize f(\eta,\Phi(\eta
))d\eta.}%
\end{equation}
We have $T\Phi\in C[0,$ $s_{0}]$\ and%
\begin{equation}
\left\vert T\Phi_{1}(s)-T\Phi_{2}(s)\right\vert =\left\vert \int_{0}^{s}%
f(\eta,\Phi_{1}(\eta))d\eta-\int_{0}^{s}f(\eta,\Phi_{2}(\eta))d\eta\right\vert
\leq O(1)s_{0}\underset{0\leq s\leq s_{0}}{\sup}\left\vert \Phi(s)_{1}%
-\Phi(s)_{2}\right\vert .
\end{equation}
By choosing $s_{0}>0$ to be a sufficiently small number, such that
$O(1)s_{0}<1$, this shows that the mapping $T:C[0,$ $s_{0}]\rightarrow C[0,$
$s_{0}]$, is a contraction with the sup-norm. By the fixed point theorem,
there exists a unique $\Phi(s)\in C[0,$ $s_{0}],$\ such that $T\Phi
(s)=\Phi(s)$. The proof is completed.
\end{proof}

And we need another lemma to show the global existence of the ordinary
differential equation (\ref{SecondorderElliptic}).

\begin{lemma}
\label{lemma3}The equation,%
\begin{equation}
\left\{
\begin{array}
[c]{c}%
\ddot{\Phi}(s)-\epsilon^{\ast}e^{-\tfrac{\Phi(s)}{K}}=0,\\
\Phi(0)=\alpha\text{, }\dot{\Phi}(0)=\beta,
\end{array}
\right.  \label{Elliptic1}%
\end{equation}
where $\epsilon^{\ast}>0$, $\alpha$ and $\beta$ are constants, has a solution
in $(-\infty,$ $+\infty)$ and $\underset{s\rightarrow\pm\infty}{\lim}%
\Phi(s)=+\infty$.
\end{lemma}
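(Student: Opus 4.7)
My plan is to exploit the fact that \eqref{Elliptic1} is an autonomous second-order ODE, so it admits a conserved ``energy.'' Multiplying the equation by $\dot{\Phi}(s)$ and integrating from $0$ to $s$ yields the first integral
\begin{equation}
\frac{1}{2}\dot{\Phi}(s)^{2} + K\epsilon^{\ast}e^{-\Phi(s)/K} = E_{0}, \qquad E_{0}:=\frac{1}{2}\beta^{2} + K\epsilon^{\ast}e^{-\alpha/K} > 0.
\end{equation}
This is the backbone of the whole argument, since it simultaneously bounds $\dot{\Phi}$ in terms of $\Phi$ and bounds $\Phi$ from below.

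From this identity I would first extract two a priori bounds valid wherever the solution exists: $|\dot{\Phi}(s)| \le \sqrt{2E_{0}}$, and $\Phi(s) \ge \Phi_{\min}:=-K\ln\!\bigl(E_{0}/(K\epsilon^{\ast})\bigr)$. The latter in turn gives $0 < \ddot{\Phi}(s) = \epsilon^{\ast}e^{-\Phi(s)/K}\le E_{0}/K$. With $\Phi$ and $\dot{\Phi}$ both uniformly bounded on any finite interval, the standard continuation argument extends the local solution of Lemma \ref{lemma2} to all of $\mathbb{R}$: if the maximal interval of existence were $(s_{-},s_{+})$ with $s_{+}<\infty$, then $\Phi(s_{+})$ and $\dot{\Phi}(s_{+})$ would exist as finite limits, and Lemma \ref{lemma2} applied at $s_{+}$ would contradict maximality; the same argument handles $s_{-}$.

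For the asymptotic behaviour $\Phi(s)\to+\infty$ as $s\to\pm\infty$, the key observation is that $\ddot{\Phi}>0$ everywhere, so $\dot{\Phi}$ is strictly increasing on $\mathbb{R}$. Hence $\dot{\Phi}$ has at most one zero $s_{\ast}$; at this zero $\Phi(s_{\ast})=\Phi_{\min}$ by the energy identity. For $s>s_{\ast}$ one has $\dot{\Phi}(s)>0$, so $\Phi$ is strictly increasing. If $\Phi(s)$ had a finite limit $L$ as $s\to+\infty$, then the energy identity would force $\dot{\Phi}(s)\to\sqrt{2E_{0}-2K\epsilon^{\ast}e^{-L/K}}>0$ (strictly positive because $L>\Phi_{\min}$), contradicting the existence of a finite limit of $\Phi$. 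Thus $\Phi(s)\to+\infty$ as $s\to+\infty$. The same reasoning, applied with $\dot{\Phi}(s)<0$ and $\Phi$ strictly decreasing for $s<s_{\ast}$, yields $\Phi(s)\to+\infty$ as $s\to-\infty$. (If $\dot{\Phi}$ has no zero, then $\beta\ne 0$ and the argument simplifies: $\Phi$ is monotone throughout, and the same limit-argument applies on each side.)

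The main obstacle I anticipate is not any single step but making sure the case analysis in the last paragraph is clean, in particular treating the cases $\beta=0$, $\beta>0$, $\beta<0$ uniformly via the monotonicity of $\dot{\Phi}$. Everything else reduces to the one-line conserved quantity and standard ODE continuation.
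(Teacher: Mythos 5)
Your overall strategy is sound and genuinely different from the paper's: the paper integrates the equation once, runs a case analysis (finite-interval existence with or without an upper bound, global existence with or without a lower bound) for the special data $\dot{\Phi}(0)=0$, and then reduces general $\beta$ to that case by translating to a point where $\dot{\Phi}$ vanishes; you instead use the conserved energy $\tfrac12\dot{\Phi}^{2}+K\epsilon^{\ast}e^{-\Phi/K}=E_{0}$ to obtain a uniform bound on $\dot{\Phi}$ and a lower bound on $\Phi$ simultaneously, which makes the continuation argument immediate and treats all $\beta$ at once. That part is correct and cleaner than the paper's.

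The gap is in your final parenthetical. If $\dot{\Phi}$ had no zero --- say $\dot{\Phi}>0$ on all of $\mathbb{R}$ --- then $\Phi$ would be increasing and bounded below by $\Phi_{\min}$, hence would converge to a finite limit as $s\rightarrow-\infty$; in that scenario the claimed conclusion $\lim_{s\rightarrow-\infty}\Phi(s)=+\infty$ is simply false, so ``the same limit-argument applies on each side'' cannot be the right disposal of this case. What you must show is that the case is empty, i.e.\ that $\dot{\Phi}$ always has a zero $s_{\ast}$. This follows quickly from your own bounds: as long as $\dot{\Phi}>0$ on $[s,0]$ one has $\Phi\leq\alpha$ there, hence $\ddot{\Phi}\geq\epsilon^{\ast}e^{-\alpha/K}>0$, so $\dot{\Phi}(s)\leq\beta-\epsilon^{\ast}e^{-\alpha/K}\left\vert s\right\vert$, which becomes negative for $\left\vert s\right\vert$ large; thus $\dot{\Phi}$ crosses zero at some finite $s_{\ast}<0$ when $\beta>0$ (and symmetrically when $\beta<0$). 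A related soft spot: your limit-argument needs $L>\Phi_{\min}$ strictly, and in the borderline scenario $L=\Phi_{\min}$ the energy identity only gives $\dot{\Phi}\rightarrow0$ and no contradiction; there one should instead note that $\ddot{\Phi}\rightarrow\epsilon^{\ast}e^{-L/K}>0$ contradicts the uniform bound $\left\vert\dot{\Phi}\right\vert\leq\sqrt{2E_{0}}$. Once the existence of $s_{\ast}$ is established, your main case analysis goes through and the proof is complete. (For what it is worth, the paper also asserts without proof the existence of a point $\tilde{s}$ with $\dot{\Phi}(\tilde{s})=0$ in its reduction step, so this is precisely the point both arguments must not skip.)
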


\begin{proof}
We prove the case $\dot{\Phi}(0)=0$ first. By integrating (\ref{Elliptic1}),
we have,%
\begin{equation}
\overset{\cdot}{\Phi}(s)=\epsilon^{\ast}\int_{0}^{s}e^{-\tfrac{\Phi(\eta)}{K}%
}d\eta\geq0.\label{lemma3eq1}%
\end{equation}
Thus, for $0<s<s_{0}$, $\Phi(s)$ has a uniform lower bound
\begin{equation}
\Phi(s)\geq\Phi(0)=\alpha.
\end{equation}
As we obtained he local existence in Lemma \ref{lemma2}, there are two
possibilities:\newline(1)$\Phi(s)$ only exists in some finite interval $[0,$
$s_{0}]$: (1a)$\underset{s\rightarrow s_{0-}}{\lim}\Phi(s)=+\infty$;
(1b)$\Phi(s)$ has an uniformly upper bound, i.e. $\Phi(s)\leq\alpha_{0}$ for
some constant $\alpha_{0}.$\newline(2)$\Phi(s)$ exists in $[0,$ $+\infty)$:
(2a)$\underset{s\rightarrow+\infty}{\lim}\Phi(s)=+\infty$; (2b)$\Phi(s)$ has
an uniformly lower bound, i.e. $\Phi(s)\leq\alpha$ for some constant
$\alpha_{0}$.\newline We claim that possibility (1) does not exist. We need to
reject (1b) first: If the statement (1b) is true, (\ref{lemma3eq1}) becomes%
\begin{equation}
\epsilon^{\ast}se^{-\tfrac{\alpha_{0}}{K}}\geq\epsilon^{\ast}\int_{0}%
^{s}e^{-\tfrac{\Phi(\eta)}{K}}d\eta=\overset{\cdot}{\Phi}(s).\label{possible1}%
\end{equation}
Thus, $\overset{\cdot}{\Phi}(s)$ is bounded in $[0,s_{0}]$. Therefore, we can
use the fixed point theorem again to obtain a large domain of existence, such
that $[0,s_{0}+\delta]$ for some positive number $\delta$. There is a
contradiction. Therefore, (1b) is rejected.\newline Next, we do not accept
(1a) because of the following reason: It is impossible that $\underset
{s\rightarrow s_{0-}}{\lim}\Phi(s)=+\infty$, as from (\ref{possible1}),
$\overset{\cdot}{\Phi}(x)$ has a upper bound in $[0,$ $s_{0}]$:%
\begin{equation}
\epsilon^{\ast}se^{-\tfrac{\alpha_{0}}{K}}\geq\overset{\cdot}{\Phi
}(s).\label{lemma3eq2}%
\end{equation}
Thus, (\ref{lemma3eq2}) becomes,
\begin{align}
\Phi(s_{0}) &  =\Phi(0)+\epsilon^{\ast}\int_{0}^{s_{0}}\overset{\cdot}{\Phi
}(\eta)d\eta\\
&  \leq\alpha+\epsilon^{\ast}\int_{0}^{s_{0}}se^{-\tfrac{\alpha_{0}}{K}}%
d\eta\\
&  =\alpha+\epsilon^{\ast}s_{0}^{2}e^{-\tfrac{\alpha_{0}}{K}}.
\end{align}
Since $\Phi(s)$ is bounded upper in $[0,$ $s_{0}]$, it contracts the statement
(1a), such that $\underset{s\rightarrow s_{0-}}{\lim}\Phi(s)=+\infty$. So, we
can exclude the possibility (1).\newline We claim that the possibility (2b)
does not exist. It is because
\begin{equation}
\overset{\cdot}{\Phi}(s)=\epsilon^{\ast}\int_{0}^{s}e^{-\tfrac{\Phi(\eta)}{K}%
}d\eta\geq\epsilon^{\ast}\int_{0}^{s}e^{-\tfrac{\alpha_{0}}{K}}d\eta
=\epsilon^{\ast}e^{-\tfrac{\alpha_{0}}{K}}s.
\end{equation}
Then, we have,%
\begin{equation}
\Phi(s)\geq\alpha+\epsilon^{\ast}e^{-\tfrac{\alpha_{0}}{K}}s.\label{lemma3eq3}%
\end{equation}
By letting $s\rightarrow+\infty$, (\ref{lemma3eq3}) turns out to be,
\begin{equation}
\Phi(s)=+\infty.
\end{equation}
Since a contradiction is established, we exclude the possibility (2b). Thus,
the equation (\ref{Elliptic1}) exists in $[0,$ $+\infty)$ and $\underset
{s\rightarrow+\infty}{\lim}\Phi(s)=+\infty$. Due to the solution is symmetric
about $s=0$, we have $\underset{s\rightarrow-\infty}{\lim}\Phi(s)=+\infty$.

For the case $\dot{\Phi}(0)=\beta$, these exist constants $\tilde{s}$ and
$\alpha_{\tilde{s}}$, such that $\Phi(0)=\beta$ in the ordinary differential
equation:
\begin{equation}
\left\{
\begin{array}
[c]{c}%
\ddot{\Phi}(s)-\epsilon^{\ast}e^{-\tfrac{\Phi(s)}{K}}=0,\\
\Phi(\tilde{s})=\alpha_{\tilde{s}}\text{, }\dot{\Phi}(\tilde{s})=0.
\end{array}
\right.  \label{fgf}%
\end{equation}
The above fact is due to the transformation $z=s-\tilde{s}$  to have%
\begin{equation}
\left\{
\begin{array}
[c]{c}%
\ddot{\Phi}(z)-\epsilon^{\ast}e^{-\tfrac{\Phi(z)}{K}}=0,\\
\Phi(0)=\alpha_{\tilde{s}}\text{, }\dot{\Phi}(0)=0.
\end{array}
\right.
\end{equation}
Then we get $\underset{z\rightarrow\pm\infty}{\lim}\Phi(z)=\underset
{s\rightarrow\pm\infty}{\lim}\Phi(s)=+\infty$ with the previous result and the
continuity of the solutions (\ref{fgf}) to show the lemma is true. This
completes the proof.
\end{proof}

On the other hand, the following lemma handles the Poisson equation
(\ref{eqeq1})$_{3}$ for our solutions (\ref{ss2}):

\begin{lemma}
\label{lemma2 copy(1)}The solutions,
\begin{equation}
\rho=\frac{{\normalsize 1}}{a(t)^{2}}e^{-\frac{\Phi\left(  \frac{Ax+By}%
{a(t)}\right)  }{K}+C}, \label{aass1}%
\end{equation}
with the second-order ordinary differential equation:%
\begin{equation}
\ddot{\Phi}(s)-\epsilon^{\ast}e^{-\tfrac{\Phi(s)}{K}}=0,\text{ }\Phi
(0)=\alpha\text{, }\dot{\Phi}(0)=\beta,
\end{equation}
where $s:=(Ax+By)/a(t)$ and $C$, $\frac{2\pi e^{C}}{A^{2}+B^{2}}%
=\epsilon^{\ast}>0$, $\alpha$ and $\beta$ are constants,\newline fit into the
Poisson equation (\ref{eqeq1})$_{3}$ in $R^{2}$.
\end{lemma}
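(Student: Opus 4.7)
The plan is to produce an explicit gravitational potential $\tilde{\Phi}(t,x,y)$ whose Laplacian matches $\alpha(2)\rho=2\pi\rho$, thereby certifying that the density (\ref{aass1}) is compatible with the Poisson equation (\ref{eqeq1})$_{3}$. The form of $\rho$ suggests the natural ansatz that $\tilde{\Phi}$ is precisely the one-dimensional profile $\Phi$ furnished by the ODE, evaluated along the linear coordinate $s=(Ax+By)/a(t)$, namely
\begin{equation}
\tilde{\Phi}(t,x,y)=\Phi\!\left(\frac{Ax+By}{a(t)}\right).
\end{equation}

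First I would apply the chain rule to compute $\Delta\tilde{\Phi}$ in the spatial variables. Since $\partial_{x}s=A/a(t)$ and $\partial_{y}s=B/a(t)$, one gets $\tilde{\Phi}_{xx}=\ddot{\Phi}(s)A^{2}/a(t)^{2}$ and $\tilde{\Phi}_{yy}=\ddot{\Phi}(s)B^{2}/a(t)^{2}$, hence
\begin{equation}
\Delta\tilde{\Phi}=\ddot{\Phi}(s)\,\frac{A^{2}+B^{2}}{a(t)^{2}}.
\end{equation}
Next, I would substitute the ODE $\ddot{\Phi}(s)=\epsilon^{\ast}e^{-\Phi(s)/K}$ to eliminate the second derivative, which yields
\begin{equation}
\Delta\tilde{\Phi}=\frac{(A^{2}+B^{2})\,\epsilon^{\ast}}{a(t)^{2}}\,e^{-\Phi(s)/K}.
\end{equation}

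Finally, using the tuning relation $\epsilon^{\ast}=2\pi e^{C}/(A^{2}+B^{2})$ built into the hypothesis, the prefactor collapses to $2\pi e^{C}/a(t)^{2}$, and comparison with (\ref{aass1}) gives $\Delta\tilde{\Phi}=2\pi\rho=\alpha(2)\rho$, which is exactly (\ref{eqeq1})$_{3}$ in $R^{2}$.

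No essential obstacle is expected: once the ansatz $\tilde{\Phi}=\Phi(s)$ is identified, the verification is a one-line chain-rule computation, and the constant $\epsilon^{\ast}$ in the ODE has been chosen precisely so that the geometric factor $A^{2}+B^{2}$ cancels against $2\pi e^{C}$. The only point worth flagging is the notational overloading of the symbol $\Phi$: in (\ref{eqeq1}) it denotes the gravitational potential as a function of $(t,\vec{x})$, whereas in (\ref{aass1}) and the ODE it denotes the one-variable profile, and the proof shows the two roles are reconciled by the evaluation $\tilde{\Phi}(t,x,y)=\Phi((Ax+By)/a(t))$.
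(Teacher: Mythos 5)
Your proposal is correct and follows essentially the same route as the paper: the paper likewise computes $\Delta\Phi\bigl(\tfrac{Ax+By}{a(t)}\bigr)$ by the chain rule to obtain $\tfrac{A^{2}+B^{2}}{a(t)^{2}}\ddot{\Phi}(s)$ and then imposes the ODE with $\epsilon^{\ast}=2\pi e^{C}/(A^{2}+B^{2})$ so that this equals $2\pi\rho$ (the paper phrases it as showing $\Delta\Phi-2\pi\rho=0$ rather than computing $\Delta\tilde{\Phi}$ directly, but the calculation is identical). Your remark about the overloaded symbol $\Phi$ is a fair observation about the paper's notation, not a mathematical difference.
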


\begin{proof}
We check that our potential function $\Phi(t,x,y)$ satisfies the Poisson
equation (\ref{eqeq1})$_{3}$:%
\begin{align}
&  {\normalsize \Delta\Phi(t,x,y)-2\pi}{\normalsize \rho}\\
&  =\nabla\cdot\nabla\Phi\left(  \frac{Ax+By}{a(t)}\right)  -\frac
{{\normalsize 2\pi}}{a(t)^{2}}e^{-\frac{^{\Phi\left(  \frac{Ax+By}%
{a(t)}\right)  }}{K}+C}\\
&  =\nabla\cdot\left[  \dot{\Phi}\left(  \frac{Ax+By}{a(t)}\right)  \frac
{A}{a(t)},\dot{\Phi}\left(  \frac{Ax+By}{a(t)}\right)  \frac{B}{a(t)}\right]
-\frac{{\normalsize 2\pi}}{a(t)^{2}}e^{-\frac{\Phi\left(  \frac{Ax+By}%
{a(t)}\right)  }{K}+C}\\
&  =\frac{\partial}{\partial x}\left[  \dot{\Phi}(\frac{Ax+By}{a(t)})\frac
{A}{a(t)}\right]  +\frac{\partial}{\partial y}\left[  \dot{\Phi}(\frac
{Ax+By}{a(t)})\frac{B}{a(t)}\right]  -\frac{{\normalsize 2\pi}}{a(t)^{2}%
}e^{-\frac{\Phi\left(  \frac{Ax+By}{a(t)}\right)  }{K}+C}\\
&  =\frac{A^{2}+B^{2}}{a(t)^{2}}\left(  \ddot{\Phi}(s)-\frac{2\pi e^{C}}%
{A^{2}+B^{2}}e^{-\tfrac{\Phi(s)}{K}}\right)  ,
\end{align}
where $A$ and $B$ are not both $0$.\newline Then, we choose $s:=(Ax+By)/a(t)$
and the ordinary differential equation:%
\begin{equation}
\ddot{\Phi}(s)-\epsilon^{\ast}e^{-\tfrac{\Phi(s)}{K}}=0,\text{ }\Phi
(0)=\alpha\text{, }\dot{\Phi}(0)=\beta,
\end{equation}
with $\frac{2\pi e^{C}}{A^{2}+B^{2}}=\epsilon^{\ast}$, $\alpha$ and $\beta$
are constants in Lemmas \ref{lemma2} and \ref{lemma3}. Therefore, our
solutions (\ref{aass1}) satisfy the Poisson equation (\ref{eqeq1})$_{3}$.

The proof is completed.
\end{proof}

Now, we are ready to check that the solutions fit into the Euler-Poisson
equations (\ref{eqeq1}).

\begin{proof}
[Proof of Theorem \ref{thm2 copy(2)}]By Lemma
\ref{lem:generalsolutionformasseq copy(1)} and Lemma \ref{lemma2 copy(1)}, the
solutions (\ref{ss2}) satisfy (\ref{eqeq1})$_{1}$ and (\ref{eqeq1})$_{3}$. For
the $x$-component of the isothermal momentum equations (\ref{eqeq1})$_{2}$ in
$R^{2}$, we have%
\begin{align}
&  \rho\left(  \frac{\partial u_{1}}{\partial t}+u_{1}\frac{\partial u_{1}%
}{\partial x}+u_{2}\frac{\partial u_{1}}{\partial y}\right)  +\frac{\partial
}{\partial x}K\rho+\rho\frac{\partial\Phi}{\partial x}\\
&  =\rho\left[  \frac{\partial}{\partial t}\left(  \frac{\dot{a}(t)}%
{a(t)}x\right)  +\frac{\dot{a}(t)}{a(t)}x\frac{\partial}{\partial x}\left(
\frac{\dot{a}(t)}{a(t)}x\right)  +\frac{\dot{a}(t)}{a(t)}y\frac{\partial
}{\partial y}\left(  \frac{\dot{a}(t)}{a(t)}x\right)  \right] \\
&  +K\frac{\partial}{\partial x}\frac{e^{-\frac{\Phi\left(  \frac{Ax+By}%
{a(t)}\right)  }{K}+C}}{a(t)^{2}}+\rho\dot{\Phi}\left(  \frac{Ax+By}%
{a(t)}\right)  \frac{A}{a(t)}\\
&  =\rho\left[  \frac{\ddot{a}(t)}{a(t)}x\right]  -\rho\dot{\Phi}\left(
\frac{Ax+By}{a(t)}\right)  \frac{A}{a(t)}+\rho\dot{\Phi}\left(  \frac
{Ax+By}{a(t)}\right)  \frac{A}{a(t)}\\
&  =0,
\end{align}
by taking $\ddot{a}(t)=0$ that is%
\begin{equation}
a(t)=a_{1}+a_{2}t.
\end{equation}
For the $y$-component of the isothermal momentum equations (\ref{eqeq1})$_{2}%
$, the proof is similar. As the readers may check it, we omit the detail
here.\newline We have shown that the solutions (\ref{ss2}) satisfy the
Euler-Poisson equations. In particular, $a_{1}>0$ and $a_{2}<0$, the solutions
(\ref{ss2}) blow up in the finite time $T=-a_{2}/a_{1}$.

The proof is completed.
\end{proof}

\begin{remark}
For the case of $A=0$ and $B=0$, the corresponding solutions (\ref{ss2}) may
be deduced to the special solutions in \cite{Y}.
\end{remark}

\begin{remark}
Our solutions (\ref{ss2}) also work for the isothermal Navier-Stokes-Poisson
equations in $R^{2}$:
\begin{equation}
\left\{
\begin{array}
[c]{rl}%
{\normalsize \rho}_{t}{\normalsize +\nabla\cdot(\rho\vec{u})} &
{\normalsize =}{\normalsize 0,}\\
{\normalsize (\rho\vec{u})}_{t}{\normalsize +\nabla\cdot(\rho\vec{u}%
\otimes\vec{u})+\nabla K\rho} & {\normalsize =}{\normalsize -\rho\nabla
\Phi+\mu\Delta\vec{u},}\\
{\normalsize \Delta\Phi(t,x)} & {\normalsize =2\pi}{\normalsize \rho,}%
\end{array}
\right.
\end{equation}
where $\mu>0$ is a positive constant.
\end{remark}

Additionally, the blowup rate about the solutions is immediately followed:

\begin{corollary}
\label{thm:2 copy(2)}The blowup rate of the solutions (\ref{ss2}) is,%
\begin{equation}
\underset{t\rightarrow T}{\lim}\rho(t,0,0)\left(  T-t\right)  ^{2}\geq O(1).
\end{equation}

\end{corollary}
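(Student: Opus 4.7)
The plan is a direct calculation: the blowup rate is governed entirely by the affine factor $a(t) = a_{1}+a_{2}t$, since evaluating the spatial profile at the origin freezes the argument of $\Phi$. Concretely, I would substitute $(x,y)=(0,0)$ into the formula for $\rho$ in (\ref{ss2}), using the initial condition $\Phi(0)=\alpha$ from the ODE for $\Phi$, to obtain
\begin{equation}
\rho(t,0,0)=\frac{1}{a(t)^{2}}\,e^{-\alpha/K+C}.
\end{equation}
This reduces the question about blowup rate to a question about $a(t)$ alone.

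Next, I would rewrite $a(t)$ in terms of $T-t$. Since $a_{1}>0$, $a_{2}<0$, the affine function $a(t)$ has a unique zero $T>0$ determined by $a_{1}+a_{2}T=0$, and on $[0,T)$ we have $a(t)>0$. Solving for $a_{1}=-a_{2}T$ and substituting back gives $a(t)=a_{2}(t-T)=|a_{2}|(T-t)$, hence $a(t)^{2}=a_{2}^{2}(T-t)^{2}$.

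Combining these two steps, I would multiply by $(T-t)^{2}$ and observe that the factor cancels exactly:
\begin{equation}
\rho(t,0,0)\,(T-t)^{2}=\frac{(T-t)^{2}}{a_{2}^{2}(T-t)^{2}}\,e^{-\alpha/K+C}=\frac{1}{a_{2}^{2}}\,e^{-\alpha/K+C}.
\end{equation}
Taking $t\to T^{-}$, the right-hand side is a positive constant independent of $t$, so the stated lower bound $\geq O(1)$ follows (indeed with equality to an explicit positive constant).

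There is no genuine obstacle here; the only subtlety worth flagging is a minor consistency check on the blowup time (the relation $a(T)=0$ yields $T=-a_{1}/a_{2}$, which one should confirm is the correct interpretation of the $T$ appearing in the theorem statement). Once that is fixed, the corollary is really a one-line consequence of the explicit form of $\rho$ together with the linearity of $a(t)$; no analysis of $\Phi$ beyond its value at $0$ is needed because the argument $(Ax+By)/a(t)$ vanishes at the origin regardless of how $a(t)\to 0$.
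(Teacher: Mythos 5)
Your computation is correct and is exactly the direct one-line argument the paper has in mind: the paper gives no written proof of this corollary, merely asserting that it is ``immediately followed,'' and the evaluation $\rho(t,0,0)=e^{-\alpha/K+C}/a(t)^{2}$ together with $a(t)=|a_{2}|(T-t)$ is the intended calculation. You are also right to flag the blowup time: $a(T)=0$ forces $T=-a_{1}/a_{2}$, so the expression $T=-a_{2}/a_{1}$ in the theorem statement is a typo, and with that correction your conclusion that $\rho(t,0,0)(T-t)^{2}$ equals the explicit positive constant $e^{-\alpha/K+C}/a_{2}^{2}$ gives the stated bound.
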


In conclusion, due to the novel solutions obtained by the separation method,
the author conjectures there exists other analytical solution in non-radial
symmetry. Further works will be continued for seeking more particular
solutions to understand the nature of the Euler-Poisson equations (\ref{eqeq1}).

\end{document}